\declaretheorem[name=Theorem]{theorem}
\declaretheorem[sibling=theorem,name=Lemma]{lemma}
\declaretheorem[sibling=theorem,name=Corollary]{corollary}
\newcounter{nummer}
\newcommand{\Oh}{\mathcal{O}}
\newcommand{\dif}{\,\mathrm{d}}
\newcommand{\dist}{\mathrm{d}}
\newcommand{\IGNORE}[1]{}
\def\Ex{\mathbb{E}}
\def\Pr{\operatorname{Pr}}
\newcommand{\tabref}[1]{Table~\ref{tab:#1}}
\newcommand{\thmref}[1]{Theorem~\ref{thm:#1}}
\newcommand{\lemref}[1]{Lemma~\ref{lem:#1}}
\newcommand{\lemrefs}[2]{Lemmas~\ref{lem:#1} and~\ref{lem:#2}}
\newcommand{\corref}[1]{Corollary~\ref{cor:#1}}
\newcommand{\secref}[1]{Section~\ref{sec:#1}}
\newcommand{\eq}[1]{equation~\eqref{eq:#1}}
\renewcommand{\epsilon}{\ensuremath{\varepsilon}}
\newcommand{\eps}{\ensuremath{\varepsilon}}
\newcommand{\temporary}[1]{}
\let\oldsqrt\sqrt
\def\hksqrt{\mathpalette\DHLhksqrt}
\def\DHLhksqrt#1#2{\setbox0=\hbox{$#1\oldsqrt{#2\,}$}\dimen0=\ht0
   \advance\dimen0-0.2\ht0
   \setbox2=\hbox{\vrule height\ht0 depth -\dimen0}%
   {\box0\lower0.4pt\box2}}
\renewcommand{\sqrt}{\hksqrt}
\renewcommand{\leq}{\leqslant}
\renewcommand{\geq}{\geqslant}
\def\nphantom{\v@true\h@true\nph@nt}
\def\nvphantom{\v@true\h@false\nph@nt}
\def\nhphantom{\v@false\h@true\nph@nt}
\def\nph@nt{\ifmmode\def\next{\mathpalette\nmathph@nt}%
  \else\let\next\nmakeph@nt\fi\next}
\def\nmakeph@nt#1{\setbox\z@\hbox{#1}\nfinph@nt}
\def\nmathph@nt#1#2{\setbox\z@\hbox{$\m@th#1{#2}$}\nfinph@nt}
\def\nfinph@nt{\setbox\tw@\null
  \ifv@ \ht\tw@\ht\z@ \dp\tw@\dp\z@\fi
  \ifh@ \wd\tw@-\wd\z@\fi \box\tw@}
\def\now{\minute=\time \hour=\time \divide \hour by 60 \hourMins=\hour \multiply\hourMins by 60
  \advance\minute by -\hourMins \zeroPadTwo{\the\hour}:\zeroPadTwo{\the\minute}}
\def\today{\the\year-\zeroPadTwo{\the\month}-\zeroPadTwo{\the\day}}
\def\zeroPadTwo#1{\ifnum #1<10 0\fi #1}
\begin{document}

\frontmatter

\title{On the diameter of hyperbolic random graphs}

\author{Tobias Friedrich$^{1}$ \and Anton Krohmer$^{1}$}


\institute{
    \small
    \begin{tabular}{c}
        $^1$ Hasso Plattner Institute, Potsdam, Germany
    \end{tabular}
}

\maketitle

\pagestyle{headings}
\pagenumbering{arabic}

\begin{abstract}
Large real-world networks are typically scale-free.
Recent research has shown that such graphs are described best in a geometric space.
More precisely, the internet can be mapped to a hyperbolic space such that geometric greedy
routing performs close to optimal (\citeauthor{boguna2010sustaining}. Nature Communications, 1:62, 2010).
This observation pushed the interest in hyperbolic networks as
a natural model for scale-free networks.
Hyperbolic random graphs follow a power-law degree distribution
with controllable exponent~$\beta$
and show high clustering
(\citeauthor{gugelmann2012random}. ICALP, pp.\ 573--585, 2012).

\medskip
\qquad
For understanding the structure of the resulting graphs and for analyzing the behavior of network algorithms,
the next question is bounding the size of the diameter.
The only known explicit bound is
$\Oh((\log n)^{32/((3-\beta)(5-\beta)) + 1})$
(\citeauthor{KiwiMitsche15}. ANALCO, pp.\ 26--39, 2015).
We present two much simpler proofs for an improved upper bound of $\Oh((\log n)^{2/(3-\beta)})$
and a lower bound of $\Omega(\log n)$.
\end{abstract}

\section{Introduction}

Large real-world networks are almost always sparse and non-regular. Their degree distribution
typically follows a \emph{power law}, which is synonymously used for being \emph{scale-free}.
Since the 1960's, large networks have been studied in detail and hundreds of models
were suggested. In the past few years, a new line of research emerged, which showed
that scale-free networks can be modeled more realistically when incorporating \emph{geometry}.

\textbf{Euclidean random graphs.}
It is not new to study graphs in a geometric space.
In fact, graphs with \emph{Euclidean geometry} have been studied intensively for more than a decade.
The standard Euclidean model are random geometric graphs which result
from placing $n$~nodes independently and uniformly at random on an Euclidean
space, 
and creating edges between pairs of nodes 
if and only if their distance is at most some fixed threshold~$r$.
These graphs have been studied in relation to subjects
such as cluster analysis, statistical physics, hypothesis testing,
and wireless sensor networks~\cite{penrose2003random}.
The resulting graphs are more or less regular and hence
do not show a scale-free behavior with power-law degree distribution
as observed in large real-world graphs.

\begin{table}[t]
\newcommand{\tikzmark}[1]{\tikz[overlay,remember picture] \node (#1) {};}
\begin{tabular}{@{\ }lc}
\toprule
\textbf{Random Graph Model}  & \textbf{Diameter} \\ \midrule
Sparse Erd\H{o}s-R\'{e}nyi 
\cite{bollobas1998random} 
& $\Theta(\log n)$~\cite{riordan2010diameter}  \\
$d$-dim.\ Euclidean
\cite{penrose2003random}
& $\Theta(n^{1/d})$~\cite{diamEucRGG} \\
Watts-Strogatz
\cite{WattsStrogatz98} 
& $\Theta(\log n)$~\cite{BollobasC88}  \\
Kleinberg
\cite{Kleinberg00}
 & $\Theta(\log n)$~\cite{MartelN04}  \\
\midrule
Chung-Lu
\cite{chung2002average} 
&  $\Theta(\log n)$~\cite{chung2002average} \tikzmark{topbrace} \\
Pref. Attachment
\cite{barabasi1999emergence}
& $\Theta(\log \log n)$~\cite{diamPA}  \\
Hyperbolic
\cite{krioukov2010hyperbolic}
 & $\Oh((\log n)^{\frac{32}{(3-\beta)(5-\beta)}+1})$~\cite{KiwiMitsche15} \tikzmark{bottombrace}\tikzmark{right} 
\\
\bottomrule
\end{tabular}
\hspace*{2cm}
\begin{tikzpicture}[overlay, remember picture]
  \draw [decoration={brace,amplitude=0.5em},decorate,thick,black]
    let \p1=(topbrace), \p2=(bottombrace) in
    ({max(\x1,\x2)}, {\y1+0.8em}) -- node[right=0.6em] {power-law graphs} ({max(\x1,\x2)}, {\y2});
\end{tikzpicture}
\caption[test]{
Known diameter bounds for various random graphs.
In all cases the diameter depends on the choice of the model parameters.
Here we consider a constant average degree.
For scale-free networks, we also assume a power law exponent $2 < \beta < 3$.\footnotemark
\vspace{-2em}
}
\label{tab:diameters}
\end{table}

\textbf{Hyperbolic random graphs.}
For modeling scale-free graphs, it is natural to apply a non-Euclidean geometry with negative curvature.
\citet{krioukov2010hyperbolic} introduced a new graph model based on \emph{hyperbolic geometry}. Similar to euclidean random graphs,
nodes are uniformly distributed in a hyperbolic space and two nodes are connected
if their hyperbolic distance is small. 
The resulting graphs have many properties observed in large real-world networks.
This was impressively demonstrated by \citet{boguna2010sustaining}:
They computed a maximum likelihood fit of the internet graph in the hyperbolic space
and showed that greedy routing in this hyperbolic space finds nearly optimal shortest paths in the internet graph.
The quality of this embedding is an indication that hyperbolic geometry naturally appears in large scale-free graphs.

\textbf{Known properties.}
\footnotetext{Note that the table therefore refers to a non-standard Preferential Attachment version with adjustable power law exponent $2<\beta<3$ (normally, $\beta = 3$).} 
A number of properties of hyperbolic random graphs have been studied.
\citet{gugelmann2012random} compute exact asymptotic expressions for the expected number of vertices of degree $k$ and prove a constant lower bound for the clustering coefficient.
They confirm that the clustering is non-vanishing and that the degree sequence follows a power-law distribution with controllable exponent $\beta$. For $2 < \beta < 3$, the hyperbolic random graph has a giant component of size $\Omega(n)$ \cite{Bode:2013aa,bfmconnected}, similar to other scale-free networks like Chung-Lu \cite{chung2002average}. Other studied properties include the clique number \cite{friedrich2015cliques}, bootstrap percolation~\cite{candellero2014bootstrap}; as well as algorithms for efficient generation of hyperbolic random graphs \cite{von2015fast} and efficient embedding of real networks in the hyperbolic plane \cite{6705650}.

\textbf{Diameter.}
The diameter, the length of the longest shortest path, is a fundamental property of a network.
It also sets a worst-case lower bound on the number of steps required for all communication processes
on the graph. In contrast to the average distance, it is determined by a single---atypical---long path.
Due to this sensitivity to small changes, it is notoriously hard to analyze.
Even subtle changes to the graph model can make an exponential difference in the diameter, as can be seen when comparing
Chung-Lu (CL) random graphs~\cite{chung2002average} and
Preferential Attachment (PA) graphs~\cite{barabasi1999emergence} 
in the considered range of the  power law exponent $2 < \beta < 3$:
On the one hand, we can embed a CL graph in the PA graph and they behave effectively
the same~\cite{UltraFastRumor:unpub};
on the other hand,
the diameter of 
CL graphs is $\Theta(\log n)$~\cite{chung2002average}
while for PA graphs it is
$\Theta(\log \log n)$~\cite{diamPA}.
\tabref{diameters} provides an overview over existing results.
It was open so far how the diameter of hyperbolic random graphs compares to the aforementioned
bounds for other scale-free graph models.
The only known results for their diameter are
$\Oh((\log n)^{\frac{32}{(3-\beta)(5-\beta)}+1})$ by \citet{KiwiMitsche15}, and a polylogarithm with no explicit constant by \citet{bringmann2015geometric}.

\textbf{Our contribution.}
We improve upon the previous results as described by the following theorems.
First, we present a much simpler proof which also shows polylogarithmic upper bound for the diameter,
but with a better (i.e.\ smaller) exponent.\footnote{The conference version of this paper \cite{DBLP:conf/icalp/FriedrichK15} also contained an incorrect proof of a logarithmic upper bound on the diameter for small average degrees. In particular, Lemma~14 contained a mistake where the expected value was taken over probabilities $p_i$ that did not add up to $1$.
It is an open problem to close the gap between the polylogarithmic upper and logarithmic lower bound. }
\begin{theorem}
\label{thm:largesection}
Let $2 < \beta < 3$. The diameter of the giant component in the hyperbolic random graph $\mathcal G(n, \alpha, C)$ is $\Oh((\log n)^{\frac2 {3-\beta}})$ with probability $1 - \Oh(n^{-3/2})$.
\end{theorem}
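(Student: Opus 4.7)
The plan is to bound the BFS distance from each vertex to a small-diameter \emph{center clique} near the origin and then combine via the triangle inequality. Let $\mathcal{C} := \{v : r_v \leq R/2\}$. For any $u,v \in \mathcal{C}$ the hyperbolic distance satisfies $d(u,v) \leq r_u + r_v \leq R$, so $\mathcal{C}$ induces a clique in $\mathcal G(n,\alpha,C)$. Using $R = 2\log n + C$, the probability that a vertex lies in $\mathcal{C}$ is of order $e^{-\alpha R/2} = \Theta(n^{-\alpha})$, so Chernoff gives $|\mathcal{C}| = \Omega(n^{1-\alpha})$ with probability $1 - O(n^{-3/2})$. It therefore suffices to prove that with probability $1 - O(n^{-3/2})$ every vertex $v$ satisfies $d(v,\mathcal{C}) \leq K$ for some $K = O((\log n)^{2/(3-\beta)})$; the diameter is then at most $2K + 1$.

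To control $\max_v d(v,\mathcal{C})$, I would introduce a nested sequence of radii $R/2 = \rho_0 < \rho_1 < \cdots < \rho_K = R$ with corresponding layers $\mathcal{L}_i := \{v : \rho_{i-1} < r_v \leq \rho_i\}$ and prove the following \emph{descent lemma}: for every $i$, every $v \in \mathcal{L}_i$ has a neighbor with radius at most $\rho_{i-1}$ with probability $1 - o(n^{-5/2})$. A union bound over the (at most) $n$ vertices per layer and the $K$ layers makes the descent lemma hold simultaneously for all vertices with probability $1 - O(n^{-3/2})$, and iterating it gives $d(v,\mathcal{C}) \leq K$ for every $v$.

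The technical heart is a neighbor-count estimate. Writing $y := R - r$ for the depth, a standard computation integrating the radial density $\propto e^{-\alpha y}$ against the angular connection threshold $\theta^{*}(y_u,y_v) \approx 2 e^{(y_u + y_v - R)/2}$ shows that the expected number of neighbors of $v$ at depth $\geq h$ is of order $e^{y_v/2 + (1/2 - \alpha)h}$, valid for $h \leq R - y_v$. This is $\Omega(\log n)$ precisely when $h \leq \gamma(y_v - O(\log\log n))$, with $\gamma := 1/(2\alpha - 1) > 1$; at that point a Chernoff (or Poisson-concentration) bound delivers the $1 - o(n^{-5/2})$ probability required. Choosing the depths $h_i := R - \rho_i$ to saturate this constraint leads to the recurrence $h_i \approx \gamma h_{i-1} - O(\log\log n)$, from which $K$ is determined by iterating down to the outer boundary.

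The main obstacle is the outermost regime, where $y_v$ is small and multiplicative per-step progress fails: the recurrence above has a fixed point at depth $\Theta(\log\log n / (1-\alpha))$, below which the direct Chernoff argument breaks down because the expected number of useful neighbors collapses to $O(1)$. In this regime I would subdivide further into very thin outer layers and use a more delicate chaining argument, combining that every giant-component vertex has at least one neighbor with a branching-process bound on how far the BFS tree from a near-boundary vertex must stretch before it first reaches depth $\gtrsim \log\log n / (1-\alpha)$, after which the bulk recurrence takes over. The polynomial loss in this near-boundary regime is precisely what produces the exponent $2/(3-\beta) = 1/(1-\alpha)$ in the final bound; the bulk alone would yield only $O(\log\log n)$.
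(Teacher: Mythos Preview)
Your layered descent to the center clique $\mathcal C = B_0(R/2)$ is exactly how the paper handles vertices of depth $y_v \ge \tfrac{\log R}{1-\alpha} + c$ (its ``inner band'' $B_I$): the recurrence $h_i \approx \gamma\, h_{i-1}$ with $\gamma = 1/(2\alpha-1)$, or equivalently the paper's layer map $j = \tfrac{\alpha}{2\alpha-1}\,i$, reaches $B_0(R/2)$ in $O(\log\log n)$ steps, and your neighbor-count computation is the right one. So that half of your plan is correct and coincides with the paper's \lemref{diamlarge} and \corref{smalldiam}.

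The gap is the outermost regime, and your proposed patch does not close it. The descent lemma ``every $v\in\mathcal L_i$ has a neighbor at radius $\le \rho_{i-1}$ with probability $1-o(n^{-5/2})$'' is simply false near the rim: a vertex at depth $O(1)$ has $\Theta(1)$ expected degree, so a constant fraction of such vertices have no deeper neighbor, and no union bound over $n$ vertices survives. Your fallback --- follow \emph{some} neighbor and run a branching-process argument until depth $\Theta\!\bigl(\tfrac{\log\log n}{1-\alpha}\bigr)$ is reached --- controls neither the radial direction of successive neighbors nor the length of the resulting walk: a path in the giant component can remain within depth $O(1)$ for many steps, and nothing in a purely radial layer decomposition forces it inward within polylogarithmic time. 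The sentence ``the polynomial loss in this near-boundary regime is precisely what produces the exponent $1/(1-\alpha)$'' is an assertion, not a derivation; nowhere in your outline does the quantity $(\log n)^{1/(1-\alpha)}$ actually appear from a calculation.

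What the paper uses instead is an \emph{angular} argument resting on a geometric fact you never invoke. The underpass lemma (\lemref{underpass}): if $\{u,w\}\in E$ and $v$ lies angularly between $u$ and $w$ with $r_v \le \min(r_u,r_w)$, then $v$ is adjacent to both $u$ and $w$. Partition $D_n$ into $n$ equal-angle sectors; since $\mu(B_I)=\Theta\!\bigl((\log n)^{-\alpha/(1-\alpha)}/n\bigr)$ per sector, a first-moment bound shows that w.h.p.\ every window of $k = \Theta\!\bigl((\log n)^{1/(1-\alpha)}\bigr)$ consecutive sectors contains a vertex of $B_I$, and a Chernoff bound gives at most $\Theta(k)$ vertices in any such window. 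Now take any $v\in B_O$ in the giant component and any path from $v$ toward $B_I$: while the path stays in $B_O$ it is confined to at most $k$ sectors --- hence at most $\Theta(k)$ vertices --- before one of its edges passes under a $B_I$-vertex, whereupon the underpass lemma furnishes an edge into $B_I$. The exponent $1/(1-\alpha)=2/(3-\beta)$ thus arises as the sector-window size needed to guarantee a $B_I$-vertex, not from iterating a radial recurrence.
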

The proof of \thmref{largesection} is presented in \secref{polyupper}.
For a lower bound on the diameter, we prove the following theorem.
\begin{theorem}
\label{thm:lowerbound}
Let $2 < \beta < 3$. Then, the diameter of the giant component in the hyperbolic random graph $\mathcal G(n, \alpha, C)$ is $\Omega(\log n)$ with probability $1 - n^{-\Omega(1)}$.
\end{theorem}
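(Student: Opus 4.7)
The plan is to exhibit, with probability $1 - n^{-\Omega(1)}$, a specific pair of vertices in the giant component whose graph distance is $\Omega(\log n)$; since the diameter is a worst-case quantity, a single such pair suffices. I would look for such a pair as the two endpoints of a \emph{long chain} hidden in a thin angular slice of the disk: a sequence $v_1, v_2, \ldots, v_K$ of $K = \Theta(\log n)$ near-boundary vertices forming an induced path that has no external neighbours except possibly at the two ends.

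Concretely, fix a small constant $c > 0$, set $K := \lceil c\log n \rceil$, and partition the disk into $M := \lfloor n/K\rfloor$ disjoint angular sectors of equal width $w = \Theta(K/n)$, with the constant tuned so that the expected number of points at radius at least $R - 1$ per sector is exactly $K$. Call a sector \emph{good} if its vertex set consists of exactly $K$ points, all at radius at least $R - 1$, arranged in angular order $\phi_1 < \cdots < \phi_K$ such that (a) each consecutive gap $\phi_{i+1}-\phi_i$ lies in a specific window of width $\Theta(1/n)$---chosen from the hyperbolic distance formula $\cosh d(u,v) = \cosh r_u \cosh r_v - \sinh r_u \sinh r_v \cos\theta_{uv}$ to force $v_i \sim v_{i+1}$ and $v_i \not\sim v_{i+2}$---and (b) each interior $v_i$ ($2 \leq i \leq K-1$) has no edge to any point outside the sector. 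In a good sector the chain is a shortest path of length $K-1$ from $v_1$ to $v_K$, witnessing a diameter of at least $K-1 = \Omega(\log n)$.

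The per-sector probability factors into: a constant-order Poisson term for the outer-vertex count; an $n^{-O(c)}$ Poisson term for the absence of inner-band contamination; an order-statistics term for the narrow-window gap condition of the form $K! \cdot K \cdot (2K)^{-K} = n^{-\Theta(c)}$ by Stirling; and an $n^{-O(c)}$ term for the ``no external edge'' condition, obtained by union-bounding over candidate external neighbours and using the exponential decay of the edge probability with hyperbolic distance. The product is $n^{-\gamma}$ for an explicit $\gamma = \gamma(c,\alpha,C)$ that can be made smaller than $1$ by choosing $c$ small. Since $M \cdot n^{-\gamma} = \Omega(n^{1-\gamma}/\log n) \to \infty$, and sector events are essentially independent once one conditions on the per-sector vertex counts, a Chernoff or second-moment argument produces a good sector with probability $1 - n^{-\Omega(1)}$. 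The endpoints $v_1$ and $v_K$ lie in the giant component with constant probability by \cite{Bode:2013aa,bfmconnected}, and losing this constant factor is harmless.

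The main obstacle is the joint estimate inside a sector: the ``narrow-window'' positive event on the angular gaps must be controlled simultaneously with the ``no external edge'' negative event involving the rest of the graph. I would handle this by first Poissonising the number of vertices per sector, after which the angular positions inside the sector are conditionally uniform and independent of the vertex configuration outside, enabling the factorisation above. A secondary technical point is verifying that the non-independence across sectors is weak enough to permit the second-moment bound; this follows from the geometric decoupling of edges between far-apart sectors and the exponentially small edge probability at angular distances much larger than $w$.
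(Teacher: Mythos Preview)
Your strategy---plant a long induced path of near-boundary vertices whose interior vertices have degree two---is exactly the paper's, but two steps do not go through as written.

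First, the claim that the chain is a shortest $v_1$--$v_K$ path is unjustified. Since you allow $v_1$ and $v_K$ to keep external neighbours, nothing forbids a short detour $v_1\to u\to\cdots\to u'\to v_K$ entirely outside the sector; indeed, if both endpoints reach the giant, their typical graph distance is $O(\log\log n)$, and your conditions (a)--(b) do not by themselves rule out that such a detour avoids the emptied region $\bigcup_{i=2}^{K-1} B_{r_{v_i}}(R)$. One would need a \lemref{underpass}-type statement that any edge angularly spanning the sector forces an external neighbour for some interior $v_i$, but that lemma only fires when the middle vertex has radius at most that of one endpoint, which your boundary $v_i$'s need not satisfy relative to both endpoints of a spanning edge. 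The paper sidesteps this by demanding that \emph{every} chain vertex $v_1,\ldots,v_p$ have no neighbour outside the chain, so the path is a genuine pendant and $d(v_1,\cdot)\ge p-1$ automatically.

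Second, the sentence ``$v_1$ and $v_K$ lie in the giant component with constant probability \ldots\ losing this constant factor is harmless'' is a real gap. The cited references give the size of the giant, not the membership probability of a fixed boundary vertex conditioned on your good-sector event; and even granting a per-sector constant, the events ``endpoint of sector $S$ is in the giant'' are globally correlated and cannot simply be absorbed into a second-moment bound. The paper resolves this by making the attachment part of the \emph{local} event: after the pendant chain it exposes one more vertex $v_{p+1}\in L_3$ and then an explicit path of length $O(\log\log\log n)$ climbing through layers to the inner band $B_I$, which is in the giant by \corref{smalldiam}. To obtain $n^{\Omega(1)}$ genuinely independent trials it first conditions on $B_0(\eps R)=\emptyset$ (probability $1-n^{-\Omega(1)}$), so every edge has angular reach $O(n^{-\eps})$ and each trial exposes only an $n^{-\Omega(1)}$-angle slice; your second-moment decoupling tacitly needs the same condition, since a single near-origin vertex would correlate all sectors. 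A minor aside: this is a hard-threshold model, so there is no ``exponential decay of the edge probability with hyperbolic distance''---what decays is the measure of the neighbourhood ball.
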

We point out that although we prove all diameter bounds on the giant component, our proofs will make apparent that the giant component is in fact the component with the largest diameter in the graph.
\medskip

\section{Notation and Preliminaries}
In this section, we briefly introduce hyperbolic random graphs. Although this paper is self-contained, we recommend to a reader who is unfamiliar with the notion of hyperbolic random graphs the more thorough investigations \cite{krioukov2010hyperbolic,gugelmann2012random}. 

Let $\mathbb H_2$ be the hyperbolic plane. Following \cite{krioukov2010hyperbolic}, we use the {\em native} representation; in which a point $v \in \mathbb H_2$ is represented by polar coordinates $(r_v, \varphi_v)$; and $r_v$ is the hyperbolic distance of $v$ to the origin.\footnote{Note that this seemingly trivial fact does not hold for conventional models (e.g.\ Poincar\'e halfplane) for the hyperbolic plane.}

To construct a hyperbolic random graph $G(n, \alpha, C)$, consider now a circle $D_n$ with radius $R = 2 \ln n + C$ that is centered at the origin of $\mathbb H_2$. Inside $D_n$, $n$ points are distributed independently as follows. For each point $v$, draw $\varphi_v$ uniformly at random from $[0,2\pi)$, and draw $r_v$ according to the probability density function
$$\rho(r) := \frac{\alpha \sinh(\alpha r)}{\cosh(\alpha R) - 1} \approx \alpha e^{\alpha(r-R)}.$$
Next, connect two points $u,v$ if their hyperbolic distance is at most $R$, i.e.\ if
\begin{equation}
\dist(u,v) := \cosh^{-1}(\cosh(r_u)\cosh(r_v) - \sinh(r_u)\sinh(r_v)\cos(\Delta\varphi_{u,v})) \leq R. \label{eq:distance}
\end{equation}
By $\Delta\varphi_{u,v}$ we describe the small relative angle between two nodes $u,v$, i.e. $\Delta\varphi_{u,v} := \cos^{-1}(\cos(\varphi_u - \varphi_v)) \leq \pi$. 

This results in a graph whose degree distribution follows a power law with exponent $\beta = 2\alpha+1$, if $\alpha \geq \tfrac12$, and $\beta = 2$ otherwise \cite{gugelmann2012random}. Since most real-world networks have been shown to have a power law exponent $2 < \beta < 3$, we assume throughout the paper that $\tfrac 12 < \alpha < 1$. \citet{gugelmann2012random} proved that the average degree in this model is then $\delta = (1+o(1))\,\frac{2 \alpha^2 e^{-C/2}}{\pi (\alpha-1/2)^2}$.

We now present a handful of Lemmas useful for analyzing the hyperbolic random graph. Most of them are taken from \cite{gugelmann2012random}. We begin by an upper bound for the angular distance between two connected nodes. Consider two nodes with radial coordinates $r,y$. Denote by $\theta_r(y)$ the maximal radial distance such that these two nodes are connected. By \eq{distance}, 
\begin{equation}
\theta_r(y) = \arccos\left(\frac{\cosh(y) \cosh(r) - \cosh(R)}{\sinh(y)\sinh(r)} \right). \label{eq:maxangle}
\end{equation}
This terse expression is closely approximated by the following Lemma.
\begin{lemma}[\cite{gugelmann2012random}]
\label{lem:maxangle}
Let $0\leq r \leq R$ and $y \geq R -r$. Then,
$$\theta_r(y) = \theta_y(r) = 2e^{\frac{R-r-y}2} (1 \pm \Theta(e^{R-r-y})). $$
\end{lemma}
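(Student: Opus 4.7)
The plan is to start from the explicit formula \eqref{eq:maxangle} and reduce it to the claimed form by a short chain of hyperbolic manipulations. Symmetry $\theta_r(y) = \theta_y(r)$ is evident from \eqref{eq:maxangle} itself. Using the identity $\cosh r\cosh y - \sinh r\sinh y = \cosh(y-r)$ together with $1-\cos\theta = 2\sin^2(\theta/2)$, the formula immediately turns into
\[
\sin^2\!\bigl(\tfrac{\theta_r(y)}{2}\bigr) = \frac{\cosh R - \cosh(y-r)}{2\sinh r \sinh y}.
\]

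Next I would convert to exponentials. A direct expansion yields the factorisation
$e^R + e^{-R} - e^{y-r} - e^{r-y} = (e^{R+r-y}-1)(e^{y-r}-e^{-R})$,
and combining this with $\sinh r \sinh y = \tfrac{1}{4}(e^r-e^{-r})(e^y-e^{-y})$ makes the leading factor $e^{R-r-y}$ come out cleanly. Setting $z := e^{R-r-y}$, a short calculation packs the result into the compact formula
\[
\sin^2\!\bigl(\tfrac{\theta_r(y)}{2}\bigr) = \frac{(z - e^{-2r})(z - e^{-2y})}{z\,(1 - e^{-2r})(1 - e^{-2y})}.
\]
Both numerator factors are non-negative because the hypothesis $y \geq R-r$ together with $r,y \leq R$ gives $z \geq e^{-2r}, z\geq e^{-2y}$.

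The crucial observation is that $r,y\leq R$ in fact gives the much stronger bound $e^{-2r}, e^{-2y} \leq z^2$: indeed $e^{-2r}/z^2 = e^{-2(R-y)} \leq 1$ and symmetrically for $e^{-2y}$. Writing $e^{-2r} = z^2\alpha$ and $e^{-2y} = z^2\beta$ with $\alpha,\beta\in[0,1]$, the compact formula becomes
\[
\sin^2\!\bigl(\tfrac{\theta_r(y)}{2}\bigr) = z\cdot\frac{(1 - z\alpha)(1 - z\beta)}{(1 - z^2\alpha)(1 - z^2\beta)} = z\bigl(1 + O(z)\bigr),
\]
where the last step is a one-line expansion of the rational function around $z=0$.

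Finally, taking square roots gives $\sin(\theta_r(y)/2) = \sqrt{z}\,(1 + O(z))$, and since $\sqrt{z}\leq 1$ we can apply $\arcsin x = x + O(x^3)$ to conclude $\theta_r(y)/2 = \sqrt{z}\,(1 + O(z))$, i.e.\ $\theta_r(y) = 2e^{(R-r-y)/2}(1\pm\Theta(e^{R-r-y}))$. The main obstacle is step three: a naive bookkeeping of the error yields only $O(1)$ in the correction factor, which is useless. Exploiting the tighter estimate $e^{-2r}, e^{-2y}\leq z^2$ (which uses $r,y\leq R$) is precisely what upgrades the correction from $O(1)$ to $O(z) = O(e^{R-r-y})$ and makes the lemma go through.
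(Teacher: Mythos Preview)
The paper does not give its own proof of this lemma; it simply quotes the statement from \cite{gugelmann2012random}. So there is nothing in the present paper to compare against, and your derivation stands as a self-contained substitute.

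Your argument is correct. The half-angle identity and the factorisation leading to
\[
\sin^2\!\bigl(\tfrac{\theta_r(y)}{2}\bigr)=\frac{(z-e^{-2r})(z-e^{-2y})}{z\,(1-e^{-2r})(1-e^{-2y})},\qquad z=e^{R-r-y},
\]
check out, and the key inequality $e^{-2r},e^{-2y}\le z^2$ (equivalently $r,y\le R$) is exactly what is needed. For the error control one can make the ``one-line expansion'' fully explicit: with $\alpha,\beta\in[0,1]$ and $0<z\le 1$,
\[
\frac{1-z\alpha}{1-z^2\alpha}-1=\frac{z\alpha(z-1)}{1-z^2\alpha},\qquad
\Bigl|\frac{1-z\alpha}{1-z^2\alpha}-1\Bigr|\le\frac{z\alpha(1-z)}{1-z}\le z,
\]
so the correction factor is $1+O(z)$ uniformly, not just asymptotically. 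Two small caveats you might state explicitly: (i) you implicitly use $y\le R$, which is not in the hypothesis of the lemma as written but is automatic in the model since every point lies in $D_n$; (ii) the final step $\arcsin x = x + O(x^3)$ is only uniform for $x$ bounded away from $1$, i.e.\ $z$ bounded away from $1$; when $z$ is bounded below by a constant the claimed estimate $\theta=2\sqrt{z}\,(1\pm\Theta(z))$ is vacuously true since both sides are $\Theta(1)$, so no harm is done, but it is worth a sentence.
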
 
For most computations on hyperbolic random graphs, we need expressions for the probability that a sampled point falls into a certain area. To this end, \citet{gugelmann2012random} define the {\em probability measure} of a set $S \subseteq D_n$ as
$$\mu(S) := \int_S f(y) \dif y,$$
where $f(r)$ is the probability mass of a point $p=(r,\varphi)$ given by $f(r) := \tfrac{\rho(r)}{2\pi} = \frac{\alpha \sinh (\alpha r)}{2\pi(\cosh(\alpha R) -1)}$.
We further define the {\em ball} with radius $x$ around a point $(r,\varphi)$ as
$$B_{r,\varphi}(x) := \{ (r', \varphi') \mid \dist((r',\varphi'),(r,\varphi)) \leq x\}.$$
We write $B_r(x)$ for $B_{r,0}(x)$. Note that $D_n = B_0(R)$. Using these definitions, we can formulate the following Lemma.
\begin{lemma}[\cite{gugelmann2012random,KiwiMitsche15}]
\label{lem:intersection}
For any $0 \leq r \leq R$ we have
\begin{align}
&\mu(B_0(r)) = e^{-\alpha(R-r)}(1+o(1)) \label{eq:smallball} \\
&\mu(B_r(R) \cap B_0(R-m)) = \tfrac{2\alpha}{\pi(\alpha - 1/2)} \cdot e^{-\alpha m - \frac12(r - m)} + \Oh(e^{-\alpha r}) \label{eq:lens}
\end{align}
\end{lemma}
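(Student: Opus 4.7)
The plan is to compute both measures by direct integration in polar coordinates, exploiting that the density $f(y) = \rho(y)/(2\pi)$ is uniform in the angular variable. For the first identity, note that in the native representation the hyperbolic distance from the origin equals the radial coordinate, so $B_0(r) = \{(y,\varphi) : y \leq r\}$ and the angular integration contributes a factor of $2\pi$. Thus
\[
\mu(B_0(r)) \;=\; \int_0^r \rho(y)\dif y \;=\; \frac{\cosh(\alpha r)-1}{\cosh(\alpha R)-1}.
\]
Applying $\cosh(\alpha R) - 1 = (1+o(1))\tfrac{1}{2}e^{\alpha R}$ and an analogous expansion of the numerator gives $(1+o(1))e^{-\alpha(R-r)}$ as claimed.

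For the second identity, a point $(y, \psi)$ lies in $B_r(R) \cap B_0(R-m)$ exactly when $y \leq R-m$ and $|\psi| \leq \theta_r(y)$, adopting the convention $\theta_r(y) = \pi$ whenever $r + y \leq R$ (the two points are then automatically within hyperbolic distance $R$). So
\[
\mu(B_r(R) \cap B_0(R-m)) \;=\; \int_0^{R-m} 2\theta_r(y)\,f(y)\dif y,
\]
and I would split this at $y = R-r$. The piece on $[0, R-r]$ is bounded by $\mu(B_0(R-r)) = (1+o(1))e^{-\alpha r} = \Oh(e^{-\alpha r})$ via the first identity. On $[R-r, R-m]$ I substitute the asymptotic expressions $f(y) = (1+o(1))\alpha e^{\alpha(y-R)}/(2\pi)$ and $\theta_r(y) = 2e^{(R-r-y)/2}(1 \pm \Theta(e^{R-r-y}))$ from Lemma~\ref{lem:maxangle}, collecting the constant prefactor $\tfrac{2\alpha}{\pi}$ and reducing the problem to evaluating two elementary exponential integrals.

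The main-term integrand has $y$-exponent $(\alpha - \tfrac{1}{2})y$; since $\alpha > \tfrac{1}{2}$, the integral is dominated by the upper endpoint $y = R-m$, and bookkeeping of the exponents yields the claimed $\tfrac{2\alpha}{\pi(\alpha - 1/2)}e^{-\alpha m - (r-m)/2}$, with the lower-endpoint remainder contributing $\Oh(e^{-\alpha r})$. The correction term from Lemma~\ref{lem:maxangle} inserts an additional factor $\Theta(e^{R-r-y})$, producing an integrand with $y$-exponent $(\alpha - \tfrac{3}{2})y$; since $\alpha < 1$ this integral is now dominated by the \emph{lower} endpoint $y = R-r$, and the same exponent arithmetic shows it is again $\Oh(e^{-\alpha r})$.

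The main obstacle is purely organizational: three distinct error sources (the $(1+o(1))$ in the density, the multiplicative $(1 \pm \Theta(e^{R-r-y}))$ in $\theta_r(y)$, and the non-dominant endpoint of the main integral) must each be shown to fit within the $\Oh(e^{-\alpha r})$ remainder. The key observation that makes everything collapse is the opposite signs of $\alpha - \tfrac{1}{2} > 0$ and $\alpha - \tfrac{3}{2} < 0$, which determine which endpoint of each exponential integral dominates; once this is noticed, the verification reduces to summing a handful of exponents.
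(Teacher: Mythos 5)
The paper does not give its own proof of this lemma; it is imported verbatim from \cite{gugelmann2012random} and \cite{KiwiMitsche15}. That said, your direct-integration argument is correct and is essentially the derivation used in those references, so there is no genuine gap to report.

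Two small remarks. First, for the expansion of $\mu(B_0(r))$ you replace $\cosh(\alpha r)-1$ by $(1+o(1))\tfrac12 e^{\alpha r}$; this requires $r\to\infty$, and the stated identity \eqref{eq:smallball} is in fact only an upper bound (up to the $(1+o(1))$ factor) when $r$ is bounded. This imprecision is inherited from the source lemma, and it is harmless here because the first identity is only used in your argument to bound the contribution of the inner region $[0,R-r]$, where an upper bound suffices. Second, your proof as written implicitly assumes $r\geq m$, so that $[R-r,R-m]$ is a genuine interval; in the complementary case $r<m$ one has $\theta_r(y)=\pi$ throughout $[0,R-m]$, whence $\mu(B_r(R)\cap B_0(R-m))=\mu(B_0(R-m))=\Oh(e^{-\alpha m})=\Oh(e^{-\alpha r})$, and the claimed main term is likewise $\Oh(e^{-\alpha r})$ since $\tfrac{r-m}{2}-\alpha m<-\alpha r$ for $r<m$ and $\alpha>\tfrac12$, so \eqref{eq:lens} holds trivially. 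Apart from these bookkeeping points, the split at $y=R-r$, the observation that $\alpha-\tfrac12>0$ makes the main integral dominated by the upper endpoint while $\alpha-\tfrac32<0$ makes the error integral dominated by the lower endpoint, and the resulting exponent arithmetic are all exactly right.
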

Since we often argue over sequences of nodes on a path, we say that a node~$v$ is {\em between} two nodes $u,w$, if $ \Delta\varphi_{u,v} + \Delta\varphi_{v,w} = \Delta\varphi_{u,w}.$ 
Recall that $\Delta\varphi_{u,v} \leq \pi$ describes the small angle between $u$ and $v$. E.g., if $u = (r_1, 0), v = (r_2, \tfrac\pi2), w = (r_3, \pi)$, then $v$ lies between $u$ and $w$. However, $w$ does not lie between $u$ and $v$ as $\Delta\varphi_{u,v} = \pi/2$ but $\Delta\varphi_{u,w} + \Delta\varphi_{w,v} = \tfrac34 \pi$.

Finally, we define the area $B_I := B_0(R-\tfrac{\log R}{1-\alpha} - c)$ as the {\em inner band}, and $B_O := D_n \setminus B_I$ as the {\em outer band}, where $c \in \mathbb R$ is a large enough constant. 

\medskip \noindent \textbf{The Poisson Point Process.}
We often want to argue about the probability that an area $S \subseteq D_n$ contains one or more nodes. To this end, we usually apply the simple formula
\begin{equation} \Pr[\exists v \in S] = 1 - (1 - \mu(S))^n \geq 1 - \exp(-n \cdot \mu(S)). \label{eq:existsnode} \end{equation}
Unfortunately, this formula significantly complicates once the positions of some nodes are already known. This introduces conditions on $\Pr[\exists v \in S]$ which can be hard to grasp analytically. To circumvent this problem, we use a Poisson point process $\mathcal P_n$ \cite{penrose2003random} which describes a different way of distributing nodes inside $D_n$. It is fully characterized by the following two properties:
\begin{itemize}
\item If two areas $S,S'$ are disjoint, then the number of nodes that fall within $S$ and $S'$ are independent random variables.
\item The expected number of points that fall within $S$ is $\int_S n \mu(S)$.
\end{itemize}
One can show that these properties imply that the number of nodes inside $S$ follows a Poisson distribution with mean $n\mu(S)$. In particular, we obtain that the number of nodes $|\mathcal P_n|$ inside $D_n$ is distributed as $\mathrm{Po}(n)$, i.e.\ $\Ex[|P_n|] = n$, and 
$$\Pr(|\mathcal P_n| = n) = \frac{e^{-n} n^n}{n!} = \Theta(n^{-1/2}).$$
Let the random variable $\mathcal G(\mathcal P_n, n, \alpha, C)$ denote the resulting graph when using the Poisson point process to distribute nodes inside $D_n$. Since it holds
$$\Pr[\mathcal G(\mathcal P_n, n, \alpha, C) = G \mid |\mathcal P_n| = n] = \Pr[\mathcal G(n,\alpha,C) = G],$$
we have that every property $p$ with $\Pr[p(\mathcal G(\mathcal P_n, n, \alpha, C))] \leq \Oh(n^{-c})$ holds for the hyperbolic random graphs with probability $\Pr[p(\mathcal G(n,\alpha,C))] \leq \Oh(n^{\frac12 -c})$.

We explicitly state whenever we use the Poisson point process $\mathcal G(\mathcal P_n,n,\alpha,C)$ instead of the normal hyperbolic random graph $\mathcal G(n,\alpha,C)$. In particular, we can use a matching expression for \eq{existsnode}:
$\Pr[\exists v \in S] = 1 - \exp(-n \cdot \mu(S)).$
\section{Polylogarithmic Upper Bound}
\label{sec:polyupper}
As an introduction to the main proof, we first show a simple polylogarithmic upper bound on the diameter of the hyperbolic random graph. We start by  investigating nodes in the inner band $B_I$ and show that they are connected by a path of at most $\Oh(\log \log n)$ nodes.  
%
We prove this by partitioning $D_n$ into $R$ layers of constant thickness $1$. Then, a node in layer $i$ has radial coordinate $\in (R - i, R-i+1]$. We denote the layer $i$ by $L_i := B_0(R-i+1) \setminus B_0(R-i)$.
\begin{lemma}
\label{lem:constbounds}
Let $1 \leq i,j \leq R/2$, and consider two nodes $v \in L_i, w \in L_j$. Then,
$$\frac2ee^{\frac{i+j-R}{2}} (1 - \Theta(e^{i+j-R})) \leq \theta_{r_u}(r_v) \leq  2e^{\frac{i+j-R}{2}} (1 + \Theta(e^{i+j-R})),$$
Furthermore, we have
$\mu(L_j \cap B_R(v)) = \Theta(e^{-\alpha j + \frac{i+j-R}{2}}),$
and, if $(i+j)/R < 1-\eps$ for some constant $\eps > 0$, we have for large $n$
$$\frac1e e^{-\alpha j + \frac{i+j-R}{2}} \leq \mu(L_j \cap B_R(v)) \leq 4 e^{-\alpha j + \frac{i+j-R}{2}}. $$
\end{lemma}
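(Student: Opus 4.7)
The plan is to invoke \lemref{maxangle} directly for the angular bound and to obtain the measure bound by writing $L_j \cap B_R(v)$ as a set difference of two lens regions, each of which is covered by \eq{lens}.

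\textbf{Angular bound.} Since $v\in L_i$ and $w\in L_j$ satisfy $r_v\in(R-i,R-i+1]$ and $r_w\in(R-j,R-j+1]$, we have
$$R-r_v-r_w \in [\,i+j-R-2,\; i+j-R\,).$$
The hypothesis $i,j\le R/2$ forces $r_w\ge R/2\ge R-r_v$, so \lemref{maxangle} applies and yields
$$\theta_{r_v}(r_w) = 2\, e^{(R-r_v-r_w)/2}\bigl(1\pm\Theta(e^{R-r_v-r_w})\bigr).$$
Inserting the two extreme values of $R-r_v-r_w$ produces the upper bound $2 e^{(i+j-R)/2}(1+\Theta(e^{i+j-R}))$ and the lower bound $(2/e)\,e^{(i+j-R)/2}(1-\Theta(e^{i+j-R}))$; the extra factor $e^{-1}$ in the latter comes from the $-2$ offset in the exponent, and the error factor $\Theta(e^{R-r_v-r_w})$ is in both cases absorbed into $\Theta(e^{i+j-R})$.

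\textbf{Measure bound.} I would start from the identity
$$L_j\cap B_R(v) = \bigl(B_0(R-j+1)\cap B_{r_v}(R)\bigr)\setminus\bigl(B_0(R-j)\cap B_{r_v}(R)\bigr),$$
apply \eq{lens} to each of the two intersections (with $r=r_v$ at $m=j-1$ resp.\ $m=j$), and subtract. The exponentials combine so that the difference equals
$$\mu(L_j\cap B_R(v)) = \tfrac{2\alpha(e^{\alpha-1/2}-1)}{\pi(\alpha-1/2)}\, e^{-\alpha j+(j-r_v)/2} + \Oh(e^{-\alpha r_v}).$$
Using $r_v\in(R-i,R-i+1]$ gives $e^{(j-r_v)/2}=\Theta(e^{(i+j-R)/2})$, and since $1/2<\alpha<1$ is fixed the prefactor is a positive constant; this immediately establishes $\mu(L_j\cap B_R(v))=\Theta(e^{-\alpha j+(i+j-R)/2})$.

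\textbf{Explicit constants.} For the sharper bounds $[1/e,4]\cdot e^{-\alpha j+(i+j-R)/2}$ I need the additive error term $\Oh(e^{-\alpha r_v})$ to be negligible against the main term. A direct comparison gives
$$\frac{e^{-\alpha r_v}}{e^{-\alpha j+(i+j-R)/2}} \le e^{-(\alpha-1/2)(R-i-j)} \le e^{-(\alpha-1/2)\eps R} \longrightarrow 0,$$
where the first inequality uses $r_v>R-i$ and the second uses $(i+j)/R<1-\eps$. Hence for large $n$ the main term dominates, and verifying that the constant prefactor together with $e^{(j-r_v)/2}\in[e^{-1/2},1]\cdot e^{(i+j-R)/2}$ lies in $[1/e,4]$ for every $\alpha\in(1/2,1)$ completes the argument. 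The only mildly delicate step is killing this additive error from \eq{lens}; without the slack $(i+j)/R<1-\eps$ only the weaker $\Theta$-bound survives.
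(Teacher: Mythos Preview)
Your approach is exactly the one the paper intends: the paper's own proof is a single sentence stating that everything follows from \lemref{maxangle}, \lemref{intersection}, and the containment $R-i<r_v\le R-i+1$, and you have simply spelled out those details. The set-difference decomposition of $L_j\cap B_R(v)$ into two lenses and the subtraction of the two instances of \eq{lens} is precisely how one extracts the measure from \lemref{intersection}, and your treatment of the additive $\Oh(e^{-\alpha r_v})$ error under the hypothesis $(i+j)/R<1-\eps$ is the right way to justify the explicit-constant version.
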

\begin{proof}
The statements follow directly from \lemrefs{maxangle}{intersection} and the fact that we have $R-i < r_v \leq R - i +1$ for a node $v \in L_i$.
\end{proof}
Using \lemref{constbounds}, we can now prove that a node $v \in B_I$ has a path of length $\Oh(\log \log n)$ that leads to $B_0(R/2)$. Recall that the inner band was defined as $B_I := B_0(R-\tfrac{\log R}{1-\alpha} - c)$, where $c$ is a large enough constant.
\begin{lemma}
\label{lem:diamlarge}
Consider a node $v$ in layer $i$. With probability $1-\Oh(n^{-3})$ it holds 
\begin{enumerate}
\item if $i \in [\tfrac{\log R}{1-\alpha} + c, \tfrac{2\log R}{1-\alpha} +c]$, then $v$ has a neighbor in layer $L_{i+1}$, and
\item if $i \in [\tfrac{2\log R}{1-\alpha} +c, R/2] $, then $v$ has a neighbor in layer $L_j$ for $j=\tfrac{\alpha}{2\alpha-1} i$.
\end{enumerate}
\end{lemma}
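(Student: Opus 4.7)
Both claims have the same shape: fix a target layer $L_j$, lower bound the expected number of nodes of $L_j$ that lie within distance $R$ of $v$, and then invoke the Poisson formula $\Pr[\exists u\in S]=1-\exp(-n\mu(S))$ (working in $\mathcal G(\mathcal P_n,n,\alpha,C)$ so that conditioning on $v$'s position does not affect the distribution of the other nodes). The goal in each case is to choose the free constant $c$ so that the expected number of neighbors in $L_j$ is at least $3\ln n$, which immediately yields a failure probability of $\Oh(n^{-3})$; the translation back to $\mathcal G(n,\alpha,C)$ then costs only an $n^{1/2}$ factor, which is absorbed by taking $c$ slightly larger.

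For part~(1), I would take $j=i+1$ and plug into the $\Theta$-bound of \lemref{constbounds}. Using $R=2\ln n+C$, the exponent $-\alpha(i{+}1)+\tfrac{i+j-R}{2}$ simplifies after multiplication by $n$ to $(1-\alpha)i+\tfrac12-\alpha-\tfrac{C}{2}$, so
\[
  n\,\mu\bigl(L_{i+1}\cap B_R(v)\bigr)=\Theta\!\left(e^{(1-\alpha)i+\tfrac12-\alpha-C/2}\right).
\]
For $i\geq\tfrac{\log R}{1-\alpha}+c$ this is $\Omega(R)=\Omega(\log n)$ with a constant that grows like $e^{(1-\alpha)c}$; picking $c$ large enough makes it $\geq 3\ln n$.

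For part~(2), the choice $j=\tfrac{\alpha}{2\alpha-1}i$ is precisely the value that makes the arithmetic in the exponent cancel nicely: a short calculation gives $-\alpha j+\tfrac{i+j}{2}=\tfrac{i(1-\alpha)}{2}$, so after multiplying by $n$ and using $R=2\ln n+C$,
\[
  n\,\mu\bigl(L_j\cap B_R(v)\bigr)=\Theta\!\left(e^{\tfrac{i(1-\alpha)}{2}-C/2}\right).
\]
For $i\geq\tfrac{2\log R}{1-\alpha}+c$ the exponent is at least $\log R+\tfrac{(1-\alpha)c}{2}-\tfrac{C}{2}$, so again choosing $c$ sufficiently large makes the expected number of neighbors in $L_j$ at least $3\ln n$, and the Poisson estimate closes the argument.

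The main obstacle is bookkeeping around the applicability of \lemref{constbounds}: for part~(2) one has to check that $j=\tfrac{\alpha}{2\alpha-1}i$ stays within the range $j\leq R/2$ on which the $\Theta$-bound is stated, which constrains $i\leq\tfrac{(2\alpha-1)R}{2\alpha}$. When this bound is violated the relevant region $L_j\cap B_R(v)$ actually equals all of $L_j$ (the angular threshold exceeds $\pi$), and the conclusion becomes easier rather than harder, so the argument adapts by replacing the $B_R(v)$ intersection with $\mu(L_j)$ from \eqref{eq:smallball}. The other bit of care is that all statements are derived under $\mathcal P_n$ and only then transferred to $\mathcal G(n,\alpha,C)$; since the claimed bound is $1-\Oh(n^{-3})$, taking the implicit constant in $c$ generously enough yields an $\Oh(n^{-7/2})$ bound under $\mathcal P_n$, leaving room for the $n^{1/2}$ transfer factor.
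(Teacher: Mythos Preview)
Your approach matches the paper's: both compute $\mu(L_j\cap B_R(v))$ via \lemref{constbounds} and bound the no-neighbor probability, with the paper working directly in $\mathcal G(n,\alpha,C)$ via $(1-\mu(S))^{n}\le\exp(-n\mu(S))$ (since the nodes are i.i.d., conditioning on $v$'s position does not affect the others, so your Poisson detour and the $n^{1/2}$ transfer are unnecessary). Your verification that \lemref{constbounds} is applicable in part~(2), together with the fallback when $j>R/2$, is actually more careful than the paper, which simply declares part~(2) to follow by ``an analogous argument.''
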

\begin{proof}
The probability that node $v \in L_i$ does not contain a neighbor in $L_{i+1}$ is
\begin{align*}
(1-\Theta(e^{-\alpha (i+1) + i + \frac{1-R}{2}}))^n 
\leq \exp(-\Theta(1) \cdot e^{\log R + c(1-\alpha)}).
\end{align*}
Since $R = 2\log n +C$ and $c$ is a large enough constant, this proves part (1) of the claim. An analogous argument shows part (2).
\end{proof}
\lemref{diamlarge} shows that there exists a path of length $\Oh(\log \log n)$ from each node $v\in B_I$  to some node $u \in B_0(R - \tfrac{2\log R}{1-\alpha} -c)$. Similarly, from $u$ there exists a path of length $\Oh(\log \log n)$ to $B_0(R/2)$ with high probability. Since we know that the nodes in $B_0(R/2)$ form a clique by the triangle inequality, we therefore obtain that all nodes in $B_I$ form a connected component with diameter $\Oh(\log \log n)$.
\begin{corollary}
\label{cor:smalldiam}
Let $\tfrac12 < \alpha < 1$. With probability $1-\Oh(n^{-3})$, all nodes $u,v \in B_I$ in the hyperbolic random graph are connected by a path of length $\Oh(\log \log n)$.
\end{corollary}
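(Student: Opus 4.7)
The plan is to iterate \lemref{diamlarge} to drive any node in $B_I$ along a short path toward the origin until it lands inside $B_0(R/2)$, and then exploit the fact that $B_0(R/2)$ is a clique. The latter is immediate: any two nodes $u,w$ with $r_u,r_w \leq R/2$ satisfy $\dist(u,w) \leq r_u + r_w \leq R$ by the triangle inequality, so they are adjacent.

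First I would upgrade \lemref{diamlarge} from a per-node to an all-nodes statement. Inspecting its proof, the failure probability is $\exp(-\Theta(1)\cdot e^{\log R + c(1-\alpha)})$, which is $n^{-\omega(1)}$ once $c$ is taken sufficiently large. Choosing $c$ so that the per-node failure is $\Oh(n^{-4})$ and union-bounding over the at most $n$ vertices, both conclusions of \lemref{diamlarge} hold simultaneously for every node of $\mathcal{G}(n,\alpha,C)$ with probability $1-\Oh(n^{-3})$. I would condition on this event throughout the remainder of the argument.

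Given $v \in B_I$, say $v \in L_i$ with $i \geq \tfrac{\log R}{1-\alpha} + c$, I would construct a sequence $v = v_0, v_1, \ldots, v_k \in B_0(R/2)$ by applying \lemref{diamlarge} at each step. As long as $v_j \in L_{i_j}$ with $i_j < \tfrac{2\log R}{1-\alpha} + c$, part~(1) produces a neighbor $v_{j+1} \in L_{i_j+1}$; this "linear" phase ends after at most $\tfrac{\log R}{1-\alpha} = \Oh(\log\log n)$ steps. Once $i_j \geq \tfrac{2\log R}{1-\alpha} + c$, part~(2) delivers a neighbor in layer $\tfrac{\alpha}{2\alpha-1} i_j$; since $\tfrac{\alpha}{2\alpha-1} > 1$ is a constant, the layer index grows geometrically, and an additional $\Oh\!\bigl(\log_{\alpha/(2\alpha-1)}(R/\log R)\bigr) = \Oh(\log\log n)$ steps suffice to reach layer index $\geq R/2$, landing in $B_0(R/2)$.

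Combining the two halves, for any $u,v \in B_I$ I can produce anchors $u',v' \in B_0(R/2)$ at graph distance $\Oh(\log\log n)$ from $u$ and $v$ respectively; since $u'v'$ is an edge of the clique $B_0(R/2)$, we obtain a $uv$-path of length $\Oh(\log\log n)$. The only subtlety worth watching is the union-bound accounting: one has to pick $c$ large enough (and shift the definition of $B_I$ accordingly) so that the $n$ simultaneous applications of \lemref{diamlarge} amount to $\Oh(n^{-3})$ rather than leaking an extra factor of $n$; everything else is routine iteration through the two geometric-growth phases above together with the triangle-inequality argument for the inner clique.
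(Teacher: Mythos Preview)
Your argument is correct and is exactly the one the paper sketches in the paragraph preceding the corollary: iterate \lemref{diamlarge} through the linear phase (part~(1)) and then the geometric phase (part~(2)) to land in the clique $B_0(R/2)$, and combine two such paths. Your explicit handling of the union bound (choosing $c$ large enough so that the per-node failure in \lemref{diamlarge} is $\Oh(n^{-4})$ before summing over $n$ vertices) is a detail the paper leaves implicit but is indeed needed to arrive at the stated $1-\Oh(n^{-3})$.
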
 

\subsection{Outer Band}
\label{sec:outer}

By \corref{smalldiam}, we obtain that the diameter of the graph induced by nodes in $B_I$ is at most $\Oh(\log\log n)$. In this section, we show that each component in $B_O$ has a polylogarithmic diameter. Then, one can easily conclude that the overall diameter of the giant component is polylogarithmic, since all nodes in $B_0(R/2)$ belong to the giant component \cite{bfmconnected}.
We begin by presenting one of the crucial Lemmas in this paper that will often be reused. 
\begin{lemma}
\label{lem:underpass}
Let $u,v,w \in V$ be nodes such that $v$ lies between $u$ and $w$, and let $\{u,w\} \in E$. If $r_v \leq r_u$ and $r_v \leq r_w$, then $v$ is connected to both $u$ and $w$. If $r_v \leq r_u$ but $r_v \geq r_w$, then $v$ is at least connected to $w$. 
\end{lemma}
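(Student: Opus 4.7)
The plan is to reduce both parts of the lemma to a monotonicity property of the angular threshold $\theta_r(y)$ from \eqref{eq:maxangle}. Recall that two points $p,q\in D_n$ are adjacent iff $\Delta\varphi_{p,q}\leq\theta_{r_p}(r_q)$, where I extend $\theta_r(y)=\pi$ whenever $r+y\leq R$ (in that regime $\dist(p,q)\leq r_p+r_q\leq R$ by the triangle inequality in $\mathbb{H}_2$, so any angular separation suffices and the formula in \eqref{eq:maxangle} is no longer needed).

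First, I would unpack the hypotheses. The assumption that $v$ lies between $u$ and $w$ means $\Delta\varphi_{u,v}+\Delta\varphi_{v,w}=\Delta\varphi_{u,w}$, so both $\Delta\varphi_{v,w}$ and $\Delta\varphi_{u,v}$ are at most $\Delta\varphi_{u,w}$; together with $\{u,w\}\in E$ this gives $\Delta\varphi_{v,w},\Delta\varphi_{u,v}\leq\theta_{r_u}(r_w)$.

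The key technical step is to show that $\theta_r(y)$ is non-increasing in each of its arguments on $(0,R]^2$. Differentiating the explicit formula in \eqref{eq:maxangle} with respect to $r$ (and using that $\arccos$ is decreasing), the claim reduces to verifying that the numerator simplifies to a multiple of $\cosh(R)\cosh(r)-\cosh(y)$, which is non-negative because $y\leq R$ forces $\cosh(y)\leq\cosh(R)\leq\cosh(R)\cosh(r)$. The second monotonicity then follows from the symmetry $\theta_r(y)=\theta_y(r)$.

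With this in hand the conclusion is a one-line chase of inequalities. In the first case ($r_v\leq r_u$ and $r_v\leq r_w$), monotonicity in the first argument gives $\theta_{r_u}(r_w)\leq\theta_{r_v}(r_w)$, so $\Delta\varphi_{v,w}\leq\theta_{r_v}(r_w)$ and therefore $\{v,w\}\in E$; a symmetric application of monotonicity in the second argument with $r_v\leq r_w$ yields $\Delta\varphi_{u,v}\leq\theta_{r_u}(r_v)$ and thus $\{u,v\}\in E$. In the second case ($r_v\leq r_u$ but $r_v\geq r_w$), only the first of these chains is invoked, and it uses exclusively $r_v\leq r_u$, so the conclusion $\{v,w\}\in E$ still follows. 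I expect the only real (yet routine) obstacle to be the derivative sign computation establishing monotonicity of $\theta_r(y)$; this is the single point at which the assumption $r,y\leq R$ is genuinely used.
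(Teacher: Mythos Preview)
Your proof is correct and follows essentially the same approach as the paper: both rely on the monotonicity of the connection relation in the radial coordinate together with monotonicity in the relative angle. The only difference is that you prove the radial monotonicity of $\theta_r(y)$ directly via the derivative computation, whereas the paper imports it as a black box from \cite[Lemma~5.28]{bfmconnected}; your version is thus more self-contained but otherwise identical in structure.
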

\begin{proof}
By \cite[Lemma 5.28]{bfmconnected}, we know that if two nodes $(r_1, \varphi_1), (r_2,\varphi_2)$ are connected, then so are $(r_1', \varphi_1), (r_2', \varphi_2)$ where $r_1 \leq r_1'$ and $r_2' \leq r_2$. Since the distance between nodes is monotone in the relative angle $\Delta\varphi$, this proves the first part of the claim. The second part can be proven by an analogous argument.
\end{proof}
For convenience, we say that an edge $\{u,w\}$ {\em passes under} $v$ if one of the requirements of \lemref{underpass} is fulfilled. Using this, we are ready to show \thmref{largesection}. In this argument, we investigate the angular distance a path can at most traverse until it passes under a node in $B_I$. By \lemref{underpass}, we then have with high probability a short path to the center $B_0(R/2)$ of the graph.

\begin{proof}[(Proof of \thmref{largesection})]
Partition the hyperbolic disc into $n$ disjoint sectors of equal angle $\Theta(1/n)$. The probability that $k$ consecutive sectors contain no node in $B_I$ is
\begin{align*}
(1- \Theta(k/n) \cdot \mu(B_0(R-\tfrac{\log R}{1-\alpha} - c)))^n &\leq \exp(- \Theta(1) \cdot k \cdot e^{-\alpha\log R/(1-\alpha)}) \\
&= \exp(-\Theta(1) \cdot k \cdot (\log n)^{-\frac\alpha{1-\alpha}}).
\end{align*}
Hence, we know that with probability $1- \Oh(n^{-3})$, there are no $k := \Theta((\log n)^{\frac{1}{1-\alpha}})$ such consecutive sectors. By a Chernoff bound, the number of nodes in $k$ such consecutive sectors is $\Theta((\log n)^{\frac1{1-\alpha}})$ with probability $1-\Oh(n^{-3})$. Applying a union bound, we get that with probability $1- \Oh(n^{-2})$, every sequence of $k$ consecutive sectors contains at least one node in $B_I$ and at most $\Theta(k)$ nodes in total. Consider now a node $v\in B_O$ that belongs to the giant component. Then, there must exist a path from $v$ to some node $u \in B_I$. By \lemref{underpass}, this path can visit at most $k$ sectors---and therefore use at most $\Theta(k)$ nodes---before reaching~$u$. From $u$, there is a path of length $\Oh(\log\log n)$ to the center $B_0(R/2)$ of the hyperbolic disc by \corref{smalldiam}. Since this holds for all nodes, and the center forms a clique, the diameter is therefore $\Oh((\log n)^{\frac{1}{1-\alpha}}) = \Oh((\log n)^{\frac{2}{3-\beta}})$.
\end{proof}
From the proof it follows that every component inhabiting $\Omega((\log n)^{\frac{1}{1-\alpha}})$ sectors is connected to the center. We derive the following Corollary.
\begin{corollary}
\label{cor:2ndcomp}
Let $2 < \beta < 3$. The second largest component of the hyperbolic random graph is of size at most $\Oh((\log n)^{\frac{2}{3-\beta}})$ with probability $1 - \Oh(n^{-3/2})$.
\end{corollary}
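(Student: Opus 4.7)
The plan is to extract this statement directly from the machinery already built for \thmref{largesection}. That proof partitions $D_n$ into $n$ sectors of angular width $\Theta(1/n)$ and establishes, with probability $1 - \Oh(n^{-3/2})$, the conjunction of two events: (i) every window of $k := \Theta((\log n)^{1/(1-\alpha)})$ consecutive sectors contains at least one node of $B_I$, and (ii) every such window contains only $\Theta(k)$ nodes in total. Both bounds use a Chernoff estimate together with a union bound over the $n$ possible starting sectors, as in \thmref{largesection}.

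The core observation is that, on this good event, any connected component $\mathcal{C}$ whose nodes angularly span $k$ or more consecutive sectors must already belong to the giant component. Indeed, by (i) some $u \in B_I$ lies in the angular interval spanned by $\mathcal{C}$. Picking two nodes $a, b \in \mathcal{C}$ whose angular coordinates flank $\varphi_u$, any $a$-$b$-path in $\mathcal{C}$ must contain a consecutive pair $v_i, v_{i+1}$ with $u$ lying between them angularly; because $u \in B_I$ satisfies $r_u < r_{v_i}$ and $r_u < r_{v_{i+1}}$, \lemref{underpass} forces $u$ to be adjacent to both $v_i$ and $v_{i+1}$, so $u \in \mathcal{C}$. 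Finally, \corref{smalldiam} together with the fact that $B_0(R/2) \subseteq B_I$ lies in the giant component \cite{bfmconnected} places $\mathcal{C}$ inside the giant component.

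Consequently, on the same good event every component other than the giant one is confined to a window of strictly fewer than $k$ consecutive sectors, and by (ii) such a window carries at most $\Theta(k) = \Oh((\log n)^{1/(1-\alpha)}) = \Oh((\log n)^{2/(3-\beta)})$ vertices, using the identity $1-\alpha = (3-\beta)/2$. The probability bound $1 - \Oh(n^{-3/2})$ is inherited verbatim from \thmref{largesection}.

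No genuine obstacle is expected: the Chernoff/union bound work, the Poisson-to-binomial transfer, and the underpassing geometry have all already been done in the proof of \thmref{largesection} and in \lemref{underpass}. The only point that demands a line of care is the angular crossing step---the path in $\mathcal{C}$ may wiggle radially and angularly, but since its endpoints lie on opposite sides of $\varphi_u$ within the same $<k$-sector window, \emph{some} consecutive edge on the path must straddle $\varphi_u$, which is exactly what \lemref{underpass} consumes.
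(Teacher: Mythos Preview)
Your proposal is correct and matches the paper's own (one-sentence) derivation: any component occupying at least $k=\Theta((\log n)^{1/(1-\alpha)})$ consecutive sectors must, via the underpass argument, be joined to a $B_I$-node and hence to the center, so every non-giant component is confined to fewer than $k$ sectors and by the Chernoff bound contains $\Oh(k)=\Oh((\log n)^{2/(3-\beta)})$ vertices. One small point to tidy: the inequality $r_u<r_{v_i}$ does not follow from $u\in B_I$ alone but from $v_i\in B_O$; this comes for free once you observe up front that any component containing a $B_I$-node is already the giant one by \corref{smalldiam}, so the component $\mathcal{C}$ under consideration may be assumed to lie entirely in $B_O$.
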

These bounds improves upon the results in \cite{KiwiMitsche15} who show an upper bound of $\Oh((\log n)^{\frac{32}{(3-\beta)(5-\beta)}+1})$ on the diameter and $\Oh((\log n)^{\frac{64}{(3-\beta)(5-\beta)}+1})$ on the second largest component.
As we will see in \thmref{lowerbound}, however, the lower bound on the diameter is only $\Omega(\log n)$. It is an open problem to bridge this gap.

\section{Logarithmic Lower Bound}
\label{sec:lower}

\citet{KiwiMitsche15} provide a proof for the existence of a path component of length $\Theta(\log n)$ with high probability. In this section, we show a stronger statement, namely that the {\em largest} component has a diameter of $\Omega(\log n)$. This proves the intuition that the component with the largest diameter is in fact the giant component, which is not obvious a priori. 

\begin{proof}[Proof of \thmref{lowerbound}]
Let $\eps := (\frac12-\frac1{4\alpha})$. Observe that for $\tfrac12 < \alpha <1$, we have $0 < \eps < \tfrac14$. Consider the model $\mathcal G(n,\alpha, C)$, i.e.\ {\em not} the Poisson point process. With high probability, there are no nodes in $B_0(\eps R)$:
\begin{align*}
\Pr[B_0(\eps R) = \emptyset]  &= \exp(-\Theta(1) \cdot e^{R/2} \cdot e^{-(\frac \alpha2 + \frac14)R)} ) \\
&= 1 - \Theta(1) \cdot e^{(\frac14 - \frac\alpha2)R}.\\
&= 1 - n^{-\Omega(1)}.
\end{align*}
In the following, we condition on the fact that there are no such nodes; and switch to the Poisson point process. Consider now a node $v \in L_1$. The largest angular distance $v$ can have to one of its neighbors is
\begin{equation}
\Delta\varphi \leq 2e^{-\frac{\eps R}2} (1 \pm \Oh(e^{-\eps R})) \leq \Oh(n^{-\eps}). \label{eq:uncoveredangle}
\end{equation}

Similarly to \thmref{largesection}, we partition the Disc $D_n$ into $\Theta(n)$ sectors of equal angle $\varphi := e^{-R/2} = \Theta(n)$. Then, two nodes $u,v\in L_1$ in neighboring sectors have angular distance at most  $2e^{-R/2} $, and are therefore connected. On the flip side, two nodes with at least $6$ sectors between them have no edge, since their angle is $6e^{-R/2} > 2e^{-R/2 +1} (1 + \Oh(e^{-R}))$.

Consider now $p$ consecutive sectors, where $p$ is to be fixed later. For each of these $p$ sectors, the probability that it contains exactly one node in $L_1$ is 
$\geq (e^{-R/2} \cdot n e^{-\alpha} ) = e^{-\Theta(1)}$, i.e.\ a constant smaller than $1$. The probability that this node has no further neighbors (apart from the neighbors in $L_1$ in the other sectors) is again $e^{-\Theta(1)}$ by \lemref{intersection}. We name these nodes $v_1, \ldots, v_p$.

Similarly, the probability that sector $p+1$ contains exactly one node $v_{p+1}$ in $L_3$ is again $e^{-\Theta(1)}$. From here, we expose a path to the inner band $B_I$ as follows. Assume we have a node $v \in L_i$. Assume further that all nodes $v_1, \ldots, v_{p}$ are to the left of $v$. Then, we consider the probability that $v$ has a neighbor in layer $L_j$ for $j = \tfrac 1{2\alpha-1}i$, while we condition on the fact that none of the nodes $v_1, \ldots, v_p$ have neighbors in the upper layers as stated before. By \lemref{constbounds} this means that in layer $L_j$, we have not yet uncovered an angle of at least 
\begin{align*}
&\tfrac2e e^{(i+\frac i {2\alpha-1} -R)/2} - 3e^{(\frac i {2\alpha-1}-R)/2} 
\geq{} \Theta(1) \cdot e^{(i+\frac i {2\alpha-1} -R)/2},
\end{align*}
as $\tfrac {3e}2 e^{-3/2} < 1$. Therefore, the probability that node $v$ has a neighbor in layer $L_j$ that is not connected to $v_1, \ldots, v_p$, is at least
$$1 - \exp(- \Theta(1) \cdot n \cdot e^{-\frac{\alpha i}{2\alpha-1} } e^{(i+\frac i {2\alpha-1} -R)/2}) =  e^{-\Theta(1)} < 1.$$
In total, the probability that $v_1, \ldots, v_p$ exist as described above; and that they are connected to $B_I$ is thereby
$e^{-\Theta(p + \log\log\log n)}. $

Furthermore, by \eq{uncoveredangle}, we know that when exposing this information we at most expose an angle of $\Oh(\tfrac pn + n^{-\Omega(1)} + \log\log\log n \cdot (\log n)^{1/(1-\alpha)} )$ of the graph. 
Therefore, if $\tfrac pn < n^{-\Omega(1)}$, we can repeat this experiment independently $n^{\Omega(1)}$ times. The probability that all of them fail is at most
$$(1 - e^{-\Theta(p+\log\log\log n)})^{n^{\Omega(1)}} = \exp(- e^{-\Theta(p)} n^{\Omega(1)}) = \exp(- n^{\Omega(1)}),$$  
if $p=\Theta(\log n)$ is chosen small enough. This proves the claim.
\end{proof}

\section{Conclusion}
We derive a new polylogarithmic upper bound on the diameter of hyperbolic random graphs; and 
further prove a logarithmic lower bound. This  immediately yields lower bounds for any broadcasting protocol that has to reach all nodes. 
Processes such as bootstrap percolation or rumor spreading therefore must run at least $\Omega(\log n)$ steps until they inform all nodes in the giant component. In particular, this result stands in contrast to the average distance of two nodes in the hyperbolic random graph, which is of order $\Theta(\log \log n)$ \cite{bringmann2015geometric,abdullah2015typical}. This implies the existence of a path that is exponentially longer than the average path.

Our work focuses on power law exponents $2<\beta<3$, but we believe that it is possible to bound the diameter for $\beta>3$ by $\Theta(\log n)$. For other scale-free models it was also interesting to study the phase transition at $\beta=2$ and $\beta=3$.

\bibliography{bibliography.bib}

\begin{thebibliography}{25}
\providecommand{\natexlab}[1]{#1}
\providecommand{\url}[1]{\texttt{#1}}
\expandafter\ifx\csname urlstyle\endcsname\relax
  \providecommand{\doi}[1]{doi: #1}\else
  \providecommand{\doi}{doi: \begingroup \urlstyle{rm}\Url}\fi

\bibitem[Abdullah et~al.(2015)Abdullah, Bode, and
  Fountoulakis]{abdullah2015typical}
M.~A. Abdullah, M.~Bode, and N.~Fountoulakis.
\newblock Typical distances in a geometric model for complex networks.
\newblock \emph{arXiv preprint arXiv:1506.07811}, 2015.

\bibitem[Barab{\'a}si and Albert(1999)]{barabasi1999emergence}
A.-L. Barab{\'a}si and R.~Albert.
\newblock Emergence of scaling in random networks.
\newblock \emph{Science}, 286:\penalty0 509--512, 1999.

\bibitem[Bode et~al.(2013)Bode, Fountoulakis, and M{\"u}ller]{Bode:2013aa}
M.~Bode, N.~Fountoulakis, and T.~M{\"u}ller.
\newblock On the giant component of random hyperbolic graphs.
\newblock In \emph{7th European Conference on Combinatorics, Graph Theory and
  Applications (EuroComb)}, pp. 425--429. 2013.

\bibitem[Bode et~al.(2014)Bode, Fountoulakis, and M{\"u}ller]{bfmconnected}
M.~Bode, N.~Fountoulakis, and T.~M{\"u}ller.
\newblock The probability that the hyperbolic random graph is connected.
\newblock \url{www.math.uu.nl/~Muell001/Papers/BFM.pdf}, 2014.

\bibitem[Bogun{\'a} et~al.(2010)Bogun{\'a}, Papadopoulos, and
  Krioukov]{boguna2010sustaining}
M.~Bogun{\'a}, F.~Papadopoulos, and D.~Krioukov.
\newblock Sustaining the internet with hyperbolic mapping.
\newblock \emph{Nature Communications}, 1:\penalty0 62, 2010.

\bibitem[Bollob{\'a}s(1998)]{bollobas1998random}
B.~Bollob{\'a}s.
\newblock \emph{Random graphs}.
\newblock Springer, 1998.

\bibitem[Bollob{\'{a}}s and Chung(1988)]{BollobasC88}
B.~Bollob{\'{a}}s and F.~R.~K. Chung.
\newblock The diameter of a cycle plus a random matching.
\newblock \emph{{SIAM} Journal of Discrete Mathematics}, 1:\penalty0 328--333,
  1988.

\bibitem[Bringmann et~al.(2015)Bringmann, Keusch, and
  Lengler]{bringmann2015geometric}
K.~Bringmann, R.~Keusch, and J.~Lengler.
\newblock Geometric inhomogeneous random graphs.
\newblock \emph{arXiv preprint arXiv:1511.00576}, 2015.

\bibitem[Candellero and Fountoulakis(2014)]{candellero2014bootstrap}
E.~Candellero and N.~Fountoulakis.
\newblock Bootstrap percolation and the geometry of complex networks, 2014.
\newblock \arxiv{1412.1301}.

\bibitem[Chung and Lu(2002)]{chung2002average}
F.~Chung and L.~Lu.
\newblock The average distances in random graphs with given expected degrees.
\newblock \emph{Proceedings of the National Academy of Sciences}, 99:\penalty0
  15879--15882, 2002.

\bibitem[Dommers et~al.(2010)Dommers, van~der Hofstad, and
  Hooghiemstra]{diamPA}
S.~Dommers, R.~van~der Hofstad, and G.~Hooghiemstra.
\newblock Diameters in preferential attachment models.
\newblock \emph{Journal of Statistical Physics}, 139:\penalty0 72--107, 2010.

\bibitem[Fountoulakis et~al.(2015)Fountoulakis, Panagiotou, and
  Sauerwald]{UltraFastRumor:unpub}
N.~Fountoulakis, K.~Panagiotou, and T.~Sauerwald.
\newblock Ultra-fast rumor spreading in models of real-world networks, 2015.
\newblock Unpublished draft.

\bibitem[Friedrich and
  Krohmer(2015{\natexlab{a}})]{DBLP:conf/icalp/FriedrichK15}
T.~Friedrich and A.~Krohmer.
\newblock On the diameter of hyperbolic random graphs.
\newblock In \emph{42nd International Colloquium on Automata, Languages, and
  Programming, ({ICALP})}, pp. 614--625, 2015{\natexlab{a}}.

\bibitem[Friedrich and Krohmer(2015{\natexlab{b}})]{friedrich2015cliques}
T.~Friedrich and A.~Krohmer.
\newblock Cliques in hyperbolic random graphs.
\newblock In \emph{34th IEEE Conference on Computer Communications (INFOCOM)},
  2015{\natexlab{b}}.
\newblock \href{http://docs.theinf.uni-jena.de/paper/2015INFOCOM.pdf}{To
  appear}.

\bibitem[Friedrich et~al.(2013)Friedrich, Sauerwald, and Stauffer]{diamEucRGG}
T.~Friedrich, T.~Sauerwald, and A.~Stauffer.
\newblock Diameter and broadcast time of random geometric graphs in arbitrary
  dimensions.
\newblock \emph{Algorithmica}, 67:\penalty0 65--88, 2013.

\bibitem[Gugelmann et~al.(2012)Gugelmann, Panagiotou, and
  Peter]{gugelmann2012random}
L.~Gugelmann, K.~Panagiotou, and U.~Peter.
\newblock Random hyperbolic graphs: degree sequence and clustering.
\newblock In \emph{39th International Colloquium on Automata, Languages, and
  Programming (ICALP)}, pp. 573--585, 2012.

\bibitem[Kiwi and Mitsche(2015)]{KiwiMitsche15}
M.~Kiwi and D.~Mitsche.
\newblock A bound for the diameter of random hyperbolic graphs.
\newblock In \emph{12th Workshop on Analytic Algorithmics and Combinatorics
  (ANALCO)}, pp. 26--39, 2015.

\bibitem[Kleinberg(2000)]{Kleinberg00}
J.~Kleinberg.
\newblock Navigation in a small world.
\newblock \emph{Nature}, 406:\penalty0 845, 2000.

\bibitem[Krioukov et~al.(2010)Krioukov, Papadopoulos, Kitsak, Vahdat, and
  Bogu{\~n}{\'a}]{krioukov2010hyperbolic}
D.~Krioukov, F.~Papadopoulos, M.~Kitsak, A.~Vahdat, and M.~Bogu{\~n}{\'a}.
\newblock Hyperbolic geometry of complex networks.
\newblock \emph{Physical Review E}, 82:\penalty0 036106, 2010.

\bibitem[Martel and Nguyen(2004)]{MartelN04}
C.~U. Martel and V.~Nguyen.
\newblock Analyzing {Kleinberg's} (and other) small-world models.
\newblock In \emph{23rd Annual {ACM} Symposium on Principles of Distributed
  Computing (PODC)}, pp. 179--188, 2004.

\bibitem[Papadopoulos et~al.(2014)Papadopoulos, Psomas, and Krioukov]{6705650}
F.~Papadopoulos, C.~Psomas, and D.~Krioukov.
\newblock Network mapping by replaying hyperbolic growth.
\newblock \emph{IEEE/ACM Transactions on Networking}, pp. 198--211, 2014.

\bibitem[Penrose(2003)]{penrose2003random}
M.~Penrose.
\newblock \emph{Random Geometric Graphs}.
\newblock Oxford scholarship online. Oxford University Press, 2003.

\bibitem[Riordan and Wormald(2010)]{riordan2010diameter}
O.~Riordan and N.~Wormald.
\newblock The diameter of sparse random graphs.
\newblock \emph{Combinatorics, Probability and Computing}, 19:\penalty0
  835--926, 2010.

\bibitem[von Looz et~al.(2015)von Looz, Staudt, Meyerhenke, and
  Prutkin]{von2015fast}
M.~von Looz, C.~L. Staudt, H.~Meyerhenke, and R.~Prutkin.
\newblock Fast generation of dynamic complex networks with underlying
  hyperbolic geometry, 2015.
\newblock \arxiv{1501.03545}.

\bibitem[Watts and Strogatz(1998)]{WattsStrogatz98}
D.~J. Watts and S.~H. Strogatz.
\newblock Collective dynamics of `small-world' networks.
\newblock \emph{Nature}, 393:\penalty0 440--442, 1998.

\end{thebibliography}

\end{document}